\newcommand{\bU}{\mathbf{U}}
\newcommand{\bo}{\mathbf{o}}
\newcommand{\bmu}{\pmb{\mu}}
\newcommand{\Bcal}{\mathcal{B}}
\newcommand{\Kcal}{\mathcal{K}}
\newcommand{\UE}{\mathtt{UE}}
\newcommand{\UEk}{\mathtt{UE}_k}
\newcommand{\Hcal}{\mathcal{H}}
\newcommand{\Xcal}{\mathcal{X}}
\newcommand{\doublewidetilde}[1]{{%
		\mathpalette\double@widetilde{#1}}}
\newcommand{\double@widetilde}[2]{%
	\sbox\z@{$\m@th#1\widetilde{#2}$}%
	\ht\z@=.5\ht\z@
	\widetilde{\box\z@}}
\newcommand*{\hili}{\color{black}}
\newtheorem{theorem}{Theorem}
\newtheorem{lemma}{Lemma}
\newtheorem{example}{Example}
\begin{document}
	
	\title{\huge Joint Beam Placement and Load Balancing Optimization for Non-Geostationary Satellite Systems\vspace*{-0.4cm}}
	
	\author{Van-Phuc~Bui$^*$, Trinh~Van~Chien$^\xi$,  Eva~Lagunas$^*$,  Jo\"el~Grotz$^\dag$, Symeon~Chatzinotas$^*$,  and Bj\"orn~Ottersten$^*$ \vspace*{-0.1cm}
		\\
		$^*$Interdisciplinary Centre for Security, Reliability and Trust (SnT), University of Luxembourg, Luxembourg\\
		$^\xi$School of Information and Communication Technology, Hanoi University of Science and Technology,  Vietnam\\ 
		$^\dag$SES Engineering, Luxembourg, Luxembourg
		\vspace*{-0.6cm}
		\thanks{This   work   was    supported   by  the   Luxembourg   National Research  Fund  (FNR)  under  the  project INtegrated Satellite - TeRrestrial Systems for Ubiquitous Beyond 5G CommunicaTions (INSTRUCT).}
		
	}
	\maketitle
	\begin{abstract}
		Non-geostationary (Non-GSO) satellite constellations have emerged as a promising solution to enable ubiquitous high-speed low-latency broadband services by generating multiple spot-beams  placed on the ground according to the user locations. However, there is an inherent trade-off between the number of active beams  and the complexity of generating a large number of beams. This paper formulates and solves a joint beam placement and load balancing  problem to carefully optimize the satellite beam and enhance the link budgets with a minimal number of active beams. We propose a two-stage algorithm design to overcome the combinatorial structure of the considered optimization problem providing a solution in polynomial time. The first stage minimizes the number of active beams, while the second stage performs a load balancing to distribute users in the coverage area of the active beams. Numerical results confirm the benefits of the proposed methodology  both in carrier-to-noise ratio and multiplexed users per beam over other benchmarks. 
	\end{abstract}
	
	\vspace{-0.2cm}
	\section{Introduction}\label{sec:intro}
	\vspace{-0.1cm}
	Non-geostationary (Non-GSO) satellite communication systems are expected to play an important role in future global wireless communications \cite{9473786}. Both low Earth orbit (LEO) and medium Earth orbit (MEO) constellations are revolutionizing the satellite broadband market \cite{SES_MEO}. The main benefit of such networks resides mainly on their proximity to Earth, which translates into much lower latency than their GSO satellite counterparts. These constellations are equipped with the latest advances in payload and antenna design, including radio resource adaptability and beamforming capabilities \cite{9237970}.

	Resource allocation problems in satellite communication typically include, for example,  $i)$ beam footprint design; $ii)$ user-to-beam assignment; and $iii)$ allocating limited radio resources  to each satellite beam such that users' demands should be satisfied. In the literature, several works have investigated the resource allocation problems for the multi-beam satellite systems comprising power control \cite{van2021user}, frequency assignment \cite{mizuike1989optimization}, joint power and bandwidth allocations \cite{pachler2020allocating}, and application of deep reinforcement learning  \cite{hu2018deep}. Regrading beam footprint design, certain works have aimed at a fully-flexible beamforming design \cite{9610022} which is not practical in terms of complexity, particularly for the fast pass times of LEO and MEO satellites. Herein, we focus on a  practical scenario which is based on  the actual functioning mode of operational Non-GSO satellites.  {\hili In particular, we consider the beam placement problem, where the conical-shaped beams are predefined in half power beam width (HPBW) and its location on the coverage area needs to be defined.}   The beam placement problem optimizes the number of users to  each satellite beam together with the beam center. Recently, the different beam placement problems have been studied for LEO constellations in \cite{pachler2021static}. These related works  proposed  heuristic algorithms to find the number of active beams to strategically and effectively use satellite resources. Although the proposed algorithms can find the number of active beams at the satellite, an unbalanced load appears between the beams. Specifically, the significantly different number of users among the active beams results in a heavy load of a few beams. The unbalanced load becomes severe when many users simultaneously request to access  some particular beams, while the other active beams are being in a sparse situation with only a few users. To overcome this issue, linear  techniques for the beam layout design in multi-beam satellite systems were proposed in \cite{camino2019linearization}. The multi-spot beam arrangement approach was studied in \cite{takahashi2020adaptive} to determine how the distances between two spot beams enhances the system performance. These works have not classified users into individual beams and specified the active beam number.
	
	Inspired by the above discussions, this paper formulates a weighted sum minimization problem for the beam placement and load balancing issue in the Non-GSO satellite systems to jointly optimize the required number of active beams and the connection quality between the users and the satellite. The main contributions of this paper are summarized as: $(i)$ We formulate a novel optimization problem to  minimize the distance from the users to their served beam center with the minimal number of active satellite beams. We emphasize that this non-convex and NP-hard optimization has not been investigated in the literature yet; $(ii)$ We propose a two-stage algorithm for efficient and practical implementations in polynomial time:  We first propose as algorithm to find the minimum number of active beams  required to to cover all the users at least with the HPBW. Next, we refine the solution of the first step targeting a balanced number of user per beam; and $iii)$ We provide experimental results to demonstrate the effectiveness of the proposed algorithm and compare with available benchmarks in the literature.
	
	\vspace*{-0.1cm}
	\section{Multi-Beam Satellite System Model}\label{sec:sys_model}
	\vspace*{-0.1cm}
	\begin{figure}[t]
		\centering
		\includegraphics[width= 0.28 \textwidth]{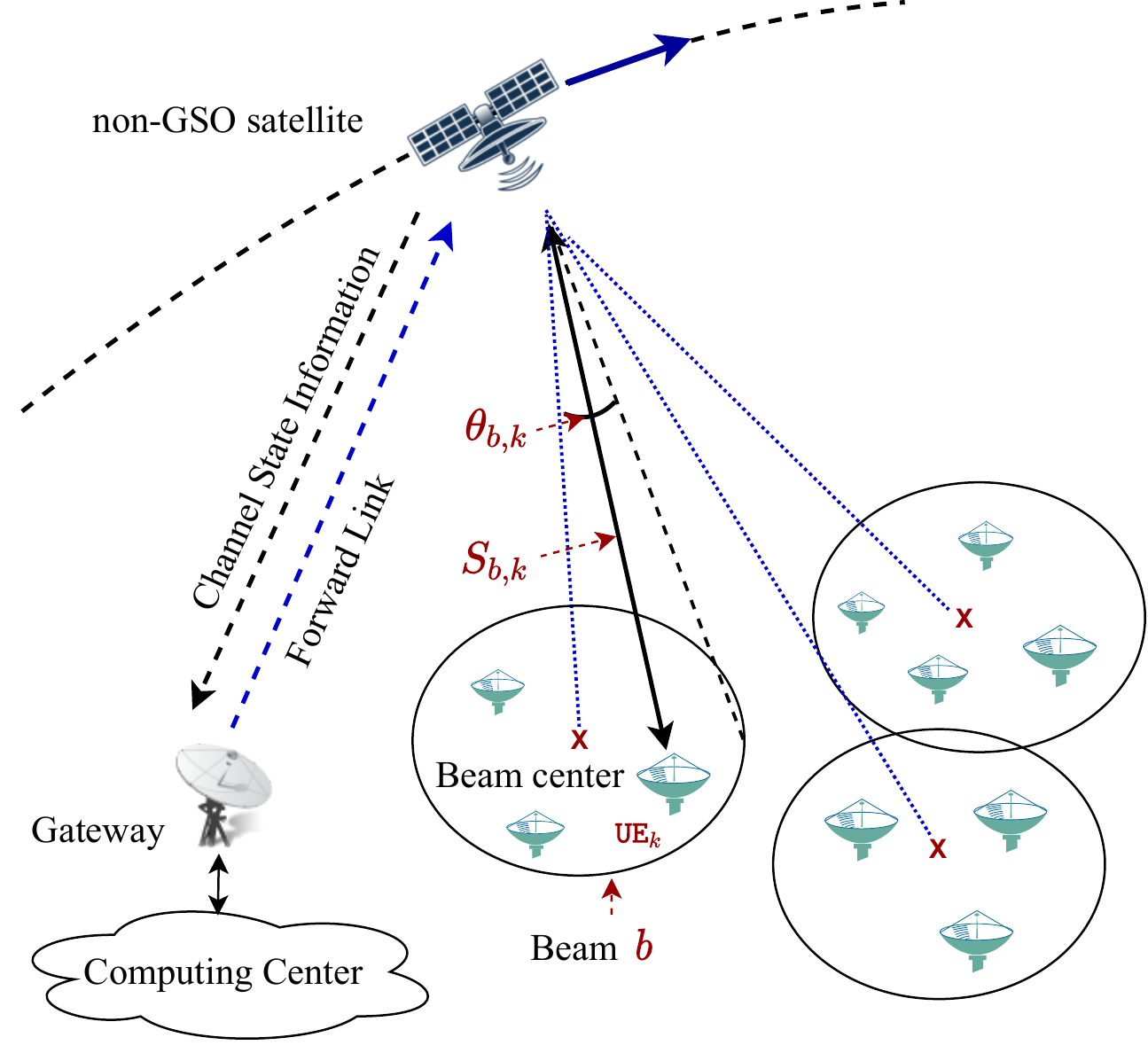}
		\caption{The multi-beam multi-user satellite system model.}
		\label{fig:system_model}
		\vspace*{-0.45cm}
	\end{figure}
	A multi-beam multi-user Non-GSO satellite system is considered in this paper, as illustrated in Fig.~\ref{fig:system_model}. The gateway located on the ground is connected to the satellite through an ideal feeder link and the beam placement optimization is handled at the gateways.  Specifically, a satellite equipped with an antenna array that enables to create a maximum of $N$ beams to serve $K$ users in the coverage area with $N \geq K$. Due to a limited power budget at the satellite and mutual interference especially at the beam boundary, it is not always beneficial to activate all the beams. Consequently, let us denote $\mathcal{B} \subseteq \{1, \ldots, N\}$ the set of active satellite beam indices with $|\mathcal{B}| \leq N$. In addition, we denote  $\mathcal{K} = \{1, \ldots,  K\}$ the set of user indices. With a proper scheduling design, the $K$ users are assigned into the active satellite beams for which $\mathcal{K}_b \subseteq \mathcal{K}, b \in \mathcal{B},$ contains the indices of the users served by the $b$-th active beam. The following property is established as
	\vspace*{-0.2cm}
	\begin{equation}
		\mathcal{K}_b \neq \emptyset, \mathcal{K}_b \cap \mathcal{K}_{b'} = \emptyset, \forall b, b' \in \mathcal{B}, \mbox{ and } \bigcup\nolimits_{b=1}^{|\mathcal{B}|} \mathcal{K}_b = \mathcal{K},
		\vspace*{-0.1cm}
	\end{equation}
	which indicates that each beam should serve at least one user. Let us denote $\theta_{b,k}$ ($0 \leq \theta_{b,k} \leq \pi/2$) the angle between the spot beam center of the $b$-th beam and the $\UEk$'s location, $k \in \mathcal \mathcal{K}_b,$ as seen from the satellite.  Then, the radiation pattern from the $b$-th beam to $\UEk$ can be mathematically formulated as a function of $\theta_{b,k}$, which is reported in the 3GPP \cite{3gpp2018study} as
	\begin{figure}[t]
	\centering
	\includegraphics[trim=0.6cm 0.0cm 1.4cm 0.6cm, clip=true, width=2.8in]{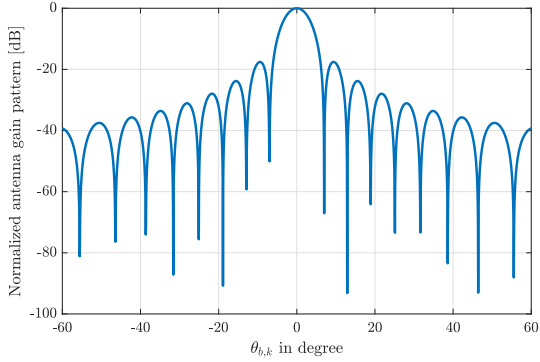} \vspace{-0.3cm}
	\caption{Normalized antenna gain pattern at the satellite with $\beta = 5 \lambda$.} \label{Fig:NBeamPattern}
			\vspace*{-0.45cm}
\end{figure}
\vspace*{-0.2cm}
	\begin{equation} \label{eq:gbk}
		g_{b,k} = g^{\max}g(\theta_{b,k}), \forall k \in \mathcal{K}_b, b \in \mathcal{B},
		\vspace*{-0.1cm}
	\end{equation}
	where $g^{\max}$ is the maximum beam pattern gain that is obtained if $\UEk$ is located at the beam center. The normalized antenna gain pattern  $g(\theta_{b,k}), \forall k \in \mathcal{K}_b, b \in \mathcal{B},$ is computed as
	\vspace*{-0.2cm}
	\begin{equation}\label{eq:gbktheta}
		g(\theta_{b,k}) =  \begin{cases}
			4\left|\frac{J_1\big(\frac{2\pi}{\lambda}\beta\sin(\theta_{b,k})\big)}{\frac{2\pi}{\lambda}\beta\sin (\theta_{b,k})}\right|^2 ,& \mbox{if } 0 < \theta_{b,k}\leq \frac{\pi}{2}, \\
			1, &\mbox{if } \ \theta_{b,k} = 0,
		\end{cases}
		\vspace*{-0.1cm}
	\end{equation}
	where $\beta$ is the radius of the antenna’s circular aperture, $\lambda$ is the carrier wavelength, and $J_1(\cdot)$ is the first kind Bessel function of the  order one. {\color{black} The Doppler shifts caused by the satellite mobility is assumed to be perfectly compensated by using proper estimation techniques \cite{8793224}.} 
	 Fig.~\ref{Fig:NBeamPattern} shows an example of the normalized beam pattern with the aperture radius $\beta = 5 \lambda$. It explicitly unveils that the main energy is concentrated on the main lobe  of each beam, which gives  hints to locate the satellite beams hereafter.    
	After that, the statistical channel gain (SCG) at an arbitrary $\UEk$, denoted by $G_{b,k}$, is defined as
	\vspace*{-0.2cm}
	\begin{equation}
		G_{b,k} = {G_{\text{rx},b,k}g_{b,k}}/{L_{\text{fs},b,k}L_\text{atm}}, \forall k \in \mathcal{K}_b, b \in \mathcal{B},
		\vspace*{-0.1cm}
	\end{equation}
	where $L_{\text{fs},b,k}$  denotes the free space path loss, which is
	\vspace*{-0.2cm}
	\begin{equation}
		L_{\text{fs},b,k} = {16\pi^2 S_{b,k}^2}/{\lambda^2}, \forall k \in \mathcal{K}_b, b \in \mathcal{B},
		\vspace*{-0.1cm}
	\end{equation} 
	where $S_{b,k}$ denotes the slant range (line-of-sight distance along a slant direction) between the satellite and $\UEk$ in the $b$-th beam.  The received antenna power gain $G_{\text{rx},b,k}$ is \cite{3gpp2018study}
	\vspace*{-0.2cm}
	\begin{equation}\label{eq:Grxbk}
		G_{\mathrm{rx},b,k} =  {\epsilon_{b,k}\pi^2 d^2}/{\lambda^2}, \forall k \in \mathcal{K}_b, b \in \mathcal{B},
		\vspace*{-0.1cm}
	\end{equation}
	where $\epsilon_{b,k}$ is the antenna efficiency at $\UEk$ in the $b$-th beam and $d$ is the satellite antenna diameter.  We assume the network is in a noise-limited scenario, where a proper bandwidth allocation has been performed to avoid harmful levels of inter-beam interference\cite{8064672}. The received carrier-to-noise ratio (CNR) at $\UE_{k}$ served by beam $b$ is 
$	\text{CNR}_k = {p_{b,k}G_{b,k}}/{\sigma^2_k}$,
where $\sigma_k^2$ denoted the noise variance at $\UEk$ and $p_{b,k}$ is the transmit power from $b$-th beam to $\UEk$.
		In the $b$-th satellite beam, the total gain at $\UEk$ through the space environment is a complicated expression of the practical aspects in multi-beam satellite communications comprising the beam pattern, effective transmitted and received antenna gains, free space path loss, and atmospheric loss. Several parameters, such as  $\tilde{g}(\theta_{b,k})$ defined in \eqref{eq:gbktheta} and $G_{\mathrm{rx},b,k}$ defined in \eqref{eq:Grxbk},  display the influences of hardware configurations from a specific reflector antenna with a circular aperture. Besides, the environmental parameters depending on the satellite beam locations, which are mathematically formulated in \eqref{eq:gbk}-\eqref{eq:Grxbk}, are of interest to design for the good link budget.  
	\vspace*{-0.2cm}
	\section{Joint Beam Placement and Load Balancing Optimization}
	\vspace*{-0.2cm}
	This section formulates and solves a beam placement problem balancing  the active beams and the effective gains.
	\vspace*{-0.2cm}
	\subsection{Problem Statement} \label{Subsec:PS}
	\vspace*{-0.1cm}
	
	As analyzed in the previous section, the statistical channel gains are designable parameters, which are obtained by optimizing the beam centers. For such, let us introduce  $\bar{\mathbf{o}}_b = [\phi_b, \theta_b]^T  $ and $ \mathbf{o}_{b,k} = [\phi_{b,k},\theta_{b,k}]^T $ as the coordinate of the $b$-th beam's center and its served $\UE_{k}$, respectively. Here $\phi$ and $\theta$ stand for the corresponding longitude and latitude. We assume that
	all considered positions have the same elevation (e.g., set to zero for the sake of simplicity). Let us denote $ \tilde{d}( \bo_{b,k},\bar{\mathbf{o}}_b)$ the geography distance between the two coordinates, which is computed by the spherical law of cosines as 
		$ \tilde{d}( \bo_{b,k},\bar{\bo}_{b}) = R  \arccos(\vartheta_{b,k}), $
	where $R$  is the Earth's radius and $\vartheta_{b,k}$ is defined as follows
		$\vartheta_{b,k} = \sin(\phi_{b,k})\sin(\phi_{b}) + \cos(\phi_{b,k})\cos(\phi_{b})\cos(\Delta\theta_{b,k} ),$
	with $\Delta\theta_{b,k} = \theta_{b,k}-\theta_{b}$, $\forall k \in \mathcal{K}_b$ and $b \in \mathcal{B}$. We aim at minimizing a utility function that balances the number of beams operating at the satellite while attempting to push all users close to the beam centers as follows:
	\vspace*{-0.2cm}
	\begin{subequations} \label{probGlobal}
		\begin{alignat}{4}
			&\underset{ \{\mathcal{K}_b \}, \{\bar{\bo}_b \}}{\mathrm{minimize}}&\quad & w_1   \sum\nolimits_{b\in\Bcal}^{}  \sum\nolimits_{k \in \mathcal{K}_b } \tilde{d}( \mathbf{o}_{b,k} ,\bar{\bo}_b)^2 + w_2 |\Bcal|\label{probGlobala}\\
			&\mbox{subject to} && g_{b, k} \geq g^{\max}/2, \forall b\in\Bcal, k \in\Kcal_b, \label{probGlobalb}\\
			&&& \mathcal{K}_b \neq \emptyset,  \mathcal{K}_b  \subseteq \mathcal{K}, \forall b \in\Bcal, \label{probGlobalc} \\
			&&& \mathcal{K}_b \cap  \mathcal{K}_{b'} = \emptyset, \forall b, b'\in \Bcal, \label{probGlobald}\\
			&&&  \bigcup\nolimits_{b=1}^{|\mathcal{B}|} \mathcal{K}_b = \mathcal{K},\label{probGlobale}
		\end{alignat}
	\end{subequations}
	where the non-negative weights $w_1, w_2 \geq 0$ satisfy $w_1+w_2=1$, and respectively represent the different priorities to the objective function of problem~\eqref{probGlobal}. The weights $w_1$ and $w_2$ are flexibly designed to handle the conflicting metrics. Specifically, we stress that the former ensures that each user is located as close to its satellite beam center as possible, while minimizing the number of beams allows allocating more power to each satellite beam conditioned on the fixed power budget. The constraints \eqref{probGlobalb} ensure that each user is located in at least the half-power-bandwidth of its' served beam, which is motivated by Fig.~\ref{Fig:NBeamPattern}, where users should be laid in the main lobe of the beam. The constraints \eqref{probGlobalc} and \eqref{probGlobald} imply that each user is only served by a single beam, and each active beam must involve at least one user for energy efficiency conditioned on the limited power budget at the satellite. Furthermore, constraint \eqref{probGlobale} guarantees that all users are in the coverage area of the satellite. Problem~\eqref{probGlobal} is a mixed-integer non-convex program due to the inherent non-convexity of the objective function and the constraints \eqref{probGlobala}. A discrete feasible domain  \eqref{probGlobalc}-\eqref{probGlobale} makes problem~\eqref{probGlobal}  NP-hard and it may require an extremely high cost to find the global optimum. Hence, a heuristic algorithm can obtain an efficient sub-optimal solution with finite time consumption. {\hili The combinatorial structure might result in multiple solutions to \eqref{probGlobal}, so the unbalanced load issue can be mitigated by a solution that mostly spreads out users across the active  beams.}
	\vspace*{-0.2cm}	
	\subsection{Define the Number of Active Beams}
	\vspace*{-0.1cm}
	To shed the light on finding a minimal number of active beams, we	 consider a circular pattern of the $b$-th beam as shown in Fig.~\ref{Fig_example}(a) with the half power bandwidth defined by the angle $2\bar{\psi}, \forall b \in \mathcal{B}$. Based on the constraints \eqref{probGlobalb},  the $b$-th beam will serve users in its HPBW and therefore, the served user set  $\Kcal_b$ can be defined by verifying the angles among all the users. In more detail, the angle $\psi_{k,\ell}$ between two users $k$ and $\ell$, with $k, \ell, \in \Kcal_b$, seen from the satellite should hold that $\psi_{k,\ell} \leq \bar{\psi}, \ k,\ell\in\Kcal_b,\forall b\in\Bcal$. {\color{black} Using HPBW $2\bar{\psi}$ as in \cite{pachler2021static, alinque20a} may result in some users not to be covered. Instead, we use  the quantity $\bar{\psi}$ to guarantee that all users in the same beam are in the HPBW area of the satellite's beams.} Therefore, to define the minimal number of active beams, we propose to solve the following optimization problem
	\vspace*{-0.15cm}
	\begin{subequations} \label{prob_beam}
		\begin{alignat}{4}
			&\underset{\{\mathcal{K}_b\}}{\mathrm{minimize}}&\quad &  |\Bcal|\label{prob_beama}\\
			&\mbox{subject to} && \psi_{k,\ell} \leq \bar{\psi}, \ \forall k,\ell\in\Kcal_b,\forall b\in\Bcal, \label{prob_beamb}\\
			&&& \eqref{probGlobalc}-\eqref{probGlobale}.
		\end{alignat}
	\end{subequations}

	\begin{figure}[t]
		\begin{minipage}{0.2\textwidth}
			\centering
			\includegraphics[trim=0.0cm 0.0cm 0.0cm 0.5cm, clip=true, width=1.05in]{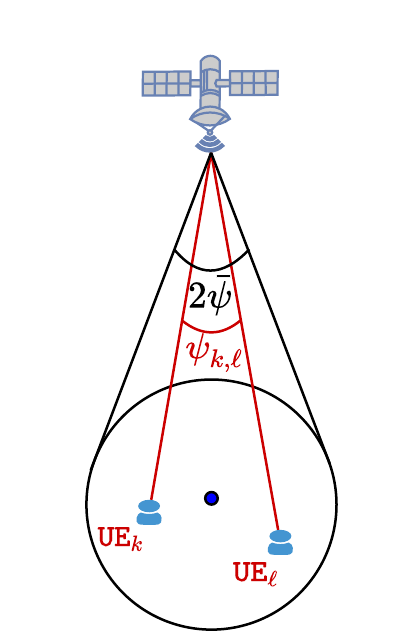} \\
			$(a)$
			\vspace*{-0.1cm}
		\end{minipage}
		\begin{minipage}{0.2\textwidth}
			\centering
			\includegraphics[trim=1.0cm 0.2cm 1.4cm 0.8cm, clip=true, width=2.0in]{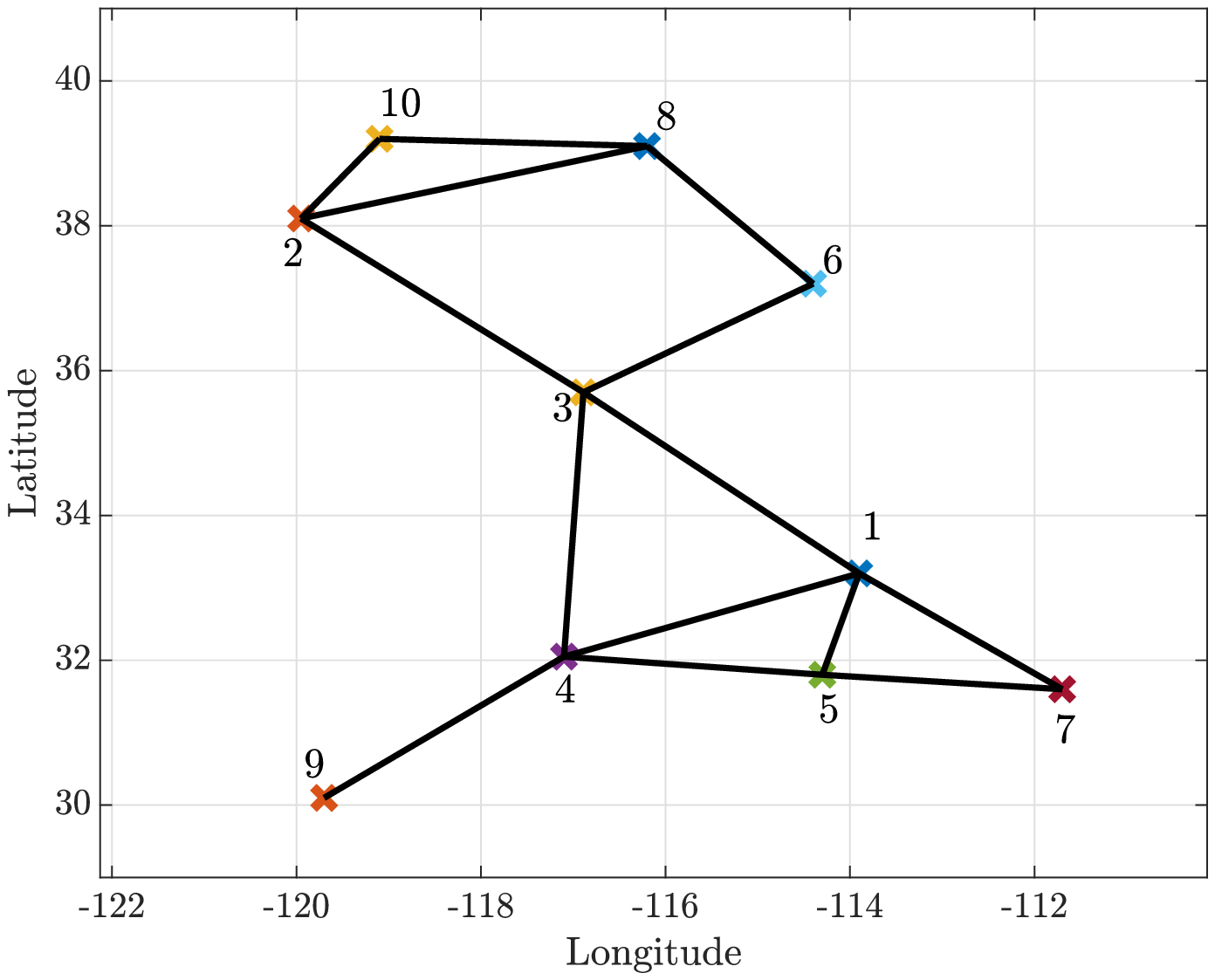} \\ \vspace*{-0.1cm}
			$(b)$
			\vspace*{-0.1cm}
		\end{minipage}
		\caption{ The considered satellite system:
			$(a)$ Example of one beam pattern with two users in the beam; $(b)$ Graph $G$ with ten vertices and $14$ edges.} \label{Fig_example}
		\vspace*{-0.6cm}
	\end{figure}
	Despite the simplification compared with the original problem~\eqref{probGlobal}, the non-convex and non-smooth properties are still preserved in problem~\eqref{prob_beam}. To effectively find a  good feasible solution, we develop a heuristic algorithm for problem \eqref{prob_beam} based on the graph theory. In particular, a graph $G$ is introduced with a pair $(\mathcal{V}(G), \mathcal{E}(G))$. The set $\mathcal{V}(G) =\{ 1, \ldots, K \}$ consists of the $K$ vertices each related to a user. Meanwhile, the set $\mathcal{E}(G)$ of edges in which each edge represents a connection between two distinct vertices, i.e.,
	\vspace*{-0.2cm}
	\begin{equation} \label{eq:Edge}
		\mathcal{E}(G) = \{ \{k,\ell\}| \psi_{k,\ell} \leq \bar{\psi}, \forall k,\ell\in \mathcal{V}(G) \},
		\vspace*{-0.1cm}
	\end{equation}
	and noticing that no two vertices are connected by more than one edge.\footnote{In our considered framework, $G$ should be an undirected graph, that is, $\{k,\ell\} \in \mathcal{E}(G)$ is equivalent to $\{l,k\} \in \mathcal{E}(G)$.}  From the constraints \eqref{prob_beamb}, we can construct the adjacency matrix $\bU \in \mathbb{R}^{K\times K}$ of graph $G$, whose $(k,l)$-th element, denoted by $[\bU]_{k,\ell}$,  is defined as follows
	\vspace*{-0.2cm}
	\begin{equation}\label{eq_U}
		[\bU]_{k,\ell} = \begin{cases}
			1, & \mbox{if } \psi_{k,\ell} \leq \bar{\psi}, k,\ell\in\Kcal,\\
			0, & \mbox{otherwise}.
		\end{cases}
		\vspace*{-0.1cm}
	\end{equation}
	From \eqref{eq_U}, the two vertices $k$ and $\ell$ are adjacent if $[\bU]_{k,\ell} =1$ meaning that there is an edge between them. One feature of graph~$G$ can be observed in Theorem~\ref{theorem1}.
	\begin{theorem} \label{theorem1}
		From the HPBW criterion defining the edges in \eqref{eq:Edge}, if each active beam serves at least $\lceil (K-1)/2 \rceil$ users  where $\lceil \cdot \rceil$ is the ceiling function,  the graph $G$ is connected.\footnote{A connected graph always exists a path between two arbitrary vertices.} The number of active beams is lower bounded by
			$|\mathcal{B}| \geq \left\lceil K/\lceil (K-1)/2 \rceil \right\rceil.$
	\end{theorem}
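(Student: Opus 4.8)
The plan is to translate the beam/clique picture into standard graph-connectivity language and then invoke a minimum-degree argument. First I would record the key structural observation: constraint~\eqref{prob_beamb} forces the users served by any beam $b$ to be pairwise within angle $\bar\psi$, so by~\eqref{eq:Edge} the set $\Kcal_b$ induces a clique in $G$; together with~\eqref{probGlobalc}--\eqref{probGlobale} the family $\{\Kcal_b\}_{b\in\Bcal}$ is a partition of $\mathcal{V}(G)=\{1,\dots,K\}$ into cliques. Consequently, if a user $k$ lies in a beam serving at least $\lceil(K-1)/2\rceil$ users, then $k$ is adjacent to the remaining members of that clique, which lower-bounds its degree in $G$. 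The first milestone is therefore to convert the hypothesis ``each active beam serves at least $\lceil(K-1)/2\rceil$ users'' into a uniform lower bound on the minimum degree $\delta(G)$, with the target $\delta(G)\ge\lceil(K-1)/2\rceil$.

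With such a degree bound in hand, I would prove connectivity by the classical contradiction argument. Suppose $G$ is disconnected and pick vertices $u,v$ lying in two different components. Neighbors of a vertex stay in its own component, so $\mathcal{N}(u)$ and $\mathcal{N}(v)$ are disjoint and neither set contains $u$ or $v$; hence $|\mathcal{N}(u)|+|\mathcal{N}(v)|\le K-2$. This contradicts $\deg(u)+\deg(v)\ge 2\lceil(K-1)/2\rceil\ge K-1$, forcing $G$ to be connected. As an auxiliary check supporting the counting claim, I would also note that every component, being a union of cliques, contains at least $\lceil(K-1)/2\rceil$ vertices, which already severely limits how many components are possible.

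For the bound on $|\Bcal|$, the idea is a pigeonhole/averaging argument on the clique partition. Since the $K$ users are partitioned among the $|\Bcal|$ beams and each beam is a clique of $G$ whose size is governed by the quantity $\lceil(K-1)/2\rceil$, I would divide the total user count $K$ by the admissible beam size and round up, yielding $|\Bcal|\ge\lceil K/\lceil(K-1)/2\rceil\rceil$, where the outer ceiling accounts for integrality of the number of beams.

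The step I expect to be the crux is the first one: pinning down the exact relation between the per-beam user count and the degree/clique-size quantity that the connectivity threshold requires. The classical criterion needs $\delta(G)\ge(K-1)/2$, whereas a clique of size $s$ only guarantees a degree of $s-1$, so the passage from ``serves at least $\lceil(K-1)/2\rceil$ users'' to the threshold must be handled with care to avoid an off-by-one slip. I would likewise take care to make explicit which quantity plays the role of the admissible beam size in the pigeonhole step, so that the counting bound is consistent with the hypothesis used for connectivity; reconciling these two uses of $\lceil(K-1)/2\rceil$ is where I anticipate the argument to be most delicate.
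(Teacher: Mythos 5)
The paper itself omits the proof, saying only that one should ``show that the neighborhoods of vertices are overlapping,'' so your overall architecture---clique partition, minimum-degree bound, disjoint-neighborhood contradiction, pigeonhole for $|\Bcal|$---is exactly the intended route. But the step you yourself flag as the crux is a genuine, and in fact unclosable, gap under the literal reading of the hypothesis. A beam serving $s\ge\lceil(K-1)/2\rceil$ users gives each of its vertices degree only $s-1\ge\lceil(K-1)/2\rceil-1$, so for two vertices $u,v$ in different components you obtain $\deg(u)+\deg(v)\ge 2\lceil(K-1)/2\rceil-2\ge K-3$, which does \emph{not} contradict $|\mathcal{N}(u)|+|\mathcal{N}(v)|\le K-2$. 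This is not a presentational off-by-one you can massage away: take $K=5$ and two mutually distant user clusters of sizes $3$ and $2$, each internally within $\bar\psi$; every beam serves at least $\lceil 4/2\rceil=2$ users, yet $G$ has two components. The connectivity claim only goes through under the stronger reading that every vertex has degree at least $\lceil(K-1)/2\rceil$ in $G$ (each user within $\bar\psi$ of that many \emph{other} users), which is what the paper's ``overlapping neighborhoods'' hint presupposes; your proposal correctly identifies the discrepancy but leaves it unresolved, so the proof as planned does not establish the stated theorem.

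The second part has a direction error you do not catch. If each beam serves \emph{at least} $m=\lceil(K-1)/2\rceil$ users and the beams partition the $K$ users, then $K=\sum_{b}|\Kcal_b|\ge |\Bcal|\,m$, i.e.\ $|\Bcal|\le\lfloor K/m\rfloor$---an \emph{upper} bound. To obtain the stated lower bound $|\Bcal|\ge\lceil K/m\rceil$ you need $m$ to be an \emph{upper} bound on how many users a single beam can serve (e.g.\ the maximum clique size of $G$ forced by the HPBW constraint \eqref{prob_beamb}), and your phrase ``whose size is governed by the quantity $\lceil(K-1)/2\rceil$'' papers over precisely this. As written, dividing $K$ by a lower bound on beam size and rounding up does not yield a lower bound on $|\Bcal|$; you would need to either reinterpret the hypothesis as a cap on per-beam users or derive such a cap from the geometry before the pigeonhole step is valid.
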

	{\color{black}\begin{proof}
	The proof is to show that the neighborhoods of vertices are overlapping and is omitted due to space limitations.
\end{proof}}
	Even though the graph $G$ may be disconnected in practice, this theorem establishes a necessary condition so that each active beam can serve multiple users when they are not quite distant from each other based on the HPBW. We can provide a lower bound on the number of active beams as the graph $G$ is connected.  The network can place the beam centers effectively by the  following important property.
	\begin{lemma} \label{lemma:clique}
		If multiple users belong to the same beam, they formulate a clique in which the vertices are mutually adjacent.
	\end{lemma}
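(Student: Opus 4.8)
The plan is to derive the lemma as an immediate consequence of the way the graph $G$ was constructed, namely that the edge set $\mathcal{E}(G)$ in \eqref{eq:Edge} is defined by exactly the same angular threshold $\bar{\psi}$ that governs co-membership in a beam through the feasibility constraint \eqref{prob_beamb}. Conceptually, the lemma is the bridge that recasts problem~\eqref{prob_beam} as a clique-partition problem on $G$: a beam assignment is feasible precisely when each served set $\Kcal_b$ induces a clique, so minimizing $|\Bcal|$ becomes minimizing the number of cliques that partition $\mathcal{V}(G)$. I would state this reading explicitly so that the role of the lemma in the subsequent algorithm design is clear.

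For the argument itself, I would first fix an arbitrary active beam $b \in \Bcal$ and dispose of the trivial case $|\Kcal_b| = 1$, where a single vertex is a clique by convention. Assuming $|\Kcal_b| \geq 2$, I would then take any two distinct users $k, \ell \in \Kcal_b$ and invoke constraint \eqref{prob_beamb}, which is part of feasibility for~\eqref{prob_beam} and forces $\psi_{k,\ell} \leq \bar{\psi}$. Feeding this inequality directly into the edge definition \eqref{eq:Edge}---equivalently, into the adjacency rule \eqref{eq_U}, which then yields $[\bU]_{k,\ell} = 1$---I would conclude that $\{k,\ell\} \in \mathcal{E}(G)$, i.e., vertices $k$ and $\ell$ are adjacent. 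Since $k$ and $\ell$ were arbitrary, every pair of vertices in $\Kcal_b$ is joined by an edge, which is exactly the statement that $\Kcal_b$ induces a clique with mutually adjacent vertices.

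Because the result unwinds directly from the definitions, there is no substantive technical obstacle; the only points demanding care are bookkeeping ones. I would verify that the angle $\psi_{k,\ell}$ is symmetric, so that the undirected edge $\{k,\ell\}$ is well defined and consistent with the footnote declaring $G$ undirected, and I would stress that \eqref{prob_beamb} quantifies over \emph{all} pairs $k,\ell \in \Kcal_b$ rather than over some ordered or restricted subset, so that no pair is left unverified. With these observations the clique property is established for every active beam $b \in \Bcal$.
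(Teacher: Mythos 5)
Your proof is correct and follows exactly the route the paper indicates: the paper's (omitted) proof is stated to rest on the edge definition in \eqref{eq:Edge} and the adjacency rule \eqref{eq_U}, and your argument unwinds precisely those definitions together with constraint \eqref{prob_beamb}. The extra care you take with the trivial single-user case and the symmetry of $\psi_{k,\ell}$ is sound bookkeeping, not a deviation.
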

	\begin{proof}
		The proof is accomplished by using the properties in \eqref{eq:Edge} and \eqref{eq_U}, which is omitted due to space limitations. 
	\end{proof}
	In Lemma~\ref{lemma:clique}, each clique is a subset of vertices that usually consists of more than three users. Subsets including one or two users per beam are also called  cliques for the sake of convenience. It provides an effective hint to place the beam center  based on the half power bandwidth, which  links to a clique of the graph~$G$.  The main idea to solve problem~\eqref{prob_beam} is that if each clique contains more vertices, the less number of active beams needs to serve all the users.  
	
	Our proposed approach is summarized in the first stage of Algorithm~\ref{alg_Find_B2}. It starts with formulating  the graph~$G$ with the sets of vertices   $\mathcal{V}(G)$ and edges $\mathcal{E}(G)$  for the satellite and user's locations by using \eqref{eq:Edge}. After that,  we can compute the adjacency matrix $\mathbf{U} \in \mathbb{R}^{K \times K}$  as shown in \eqref{eq_U}. Let us denote $\mathcal{H}_i = \{ \tilde{\Xcal}_{b,i} \}_{b\in\Bcal}$ the subset comprising of cliques $\tilde{\Xcal}_{b,i}$ with the same number of edges, e.g., $\Hcal_i$ includes the cliques with single edge and so on. Here, the index $i = 1, \ldots,i_{\max}$ stands for the number of edges in a clique. Here, $i_{\max}$ is the maximum number of edges in a clique. Next, we introduce $\tilde{\Kcal} = \{ \tilde{\Kcal}^{(\kappa)} \}$ the dictionary that contains all possibilities to the solution of  problem~\eqref{prob_beam}, each denoted as $ \tilde{\Kcal}^{(\kappa)}$. At the beginning,  $\tilde{\Kcal}$ is initially set to be empty. As aforementioned, the combinatorial structure may result in multiple solutions to problem~\eqref{prob_beam}. Consequently, Algorithm~\ref{alg_Find_B2} must scrutinize many available combinations in an iterative manner to find the minimum number of active beams to serve all the users. Alternatively, the number of solutions $\{\tilde{\Kcal}^{(\kappa)}\}$ will be gradually expanded along with iterations. The iterative approach starts scanning $\Hcal_{i_{\max}}$ to define  $\{ \tilde{\Kcal}^{(\kappa)} \}$, each formulated as
	\vspace*{-0.1cm}
	\begin{equation}
		\begin{split}
			\tilde{\Kcal}^{(\kappa)} =&  \big\{ \tilde{\Xcal}_{b,i_{\max}} \big|  \tilde{\Xcal}_{b,i_{\max}} \cap  \tilde{\Xcal}_{b',i_{\max}} = \emptyset, \\
			& \forall  \tilde{\Xcal}_{b,i_{\max}}, \tilde{\Xcal}_{b',i_{\max}} \subseteq \Hcal_{i_{\max}} \big\}, \forall \kappa.
		\end{split}
		\vspace*{-0.1cm}
	\end{equation}
	After that, the dictionary $\tilde{\Kcal}$ will be updated by scrutinizing all the remaining subsets $\{\Hcal_i\} \setminus \Hcal_{i_{\max}}$. The subset $\Hcal_{i_{\max} - i}$ is investigated at iteration~$i$ and the dictionary is expanded as
	\vspace*{-0.2cm}
	\begin{equation} \label{eq:Kkappa}
		\tilde{\Kcal}^{(\kappa)} \leftarrow \tilde{\Kcal}^{(\kappa)} \cup \{\tilde{\Xcal}_{b,i}^\ast\}, \forall \kappa,
		\vspace*{-0.1cm}
	\end{equation}
	where the following definition holds $ 		\tilde{\Xcal}_{b,i_{\max} - i}^\ast = \big\{ \tilde{\Xcal}_{b,i_{\max} -i} \big|
 \tilde{\Xcal}_{b,i_{\max}-i} \cap  \tilde{\Xcal}_{b',i_{\max}-i} = \emptyset,  \forall  \tilde{\Xcal}_{b,i_{\max}-i},
 \tilde{\Xcal}_{b',i_{\max}-i} \subseteq \Hcal_{i_{\max}-i},  \tilde{\Xcal}_{b,i_{\max} - i} \cap \tilde{\Kcal}^{(\kappa)} = \emptyset  \big\}. $
	Even though the dictionary update in \eqref{eq:Kkappa} may lead to a local solution to problem~\eqref{prob_beam},  the condition $ \tilde{\Xcal}_{b,i_{\max} - i} \cap \tilde{\Kcal}^{(\kappa)} = \emptyset$  truncates many cliques that makes our proposed approach have the total cost significantly lower than an exhaustive search. After scrutinizing through all  the subsets $\Hcal_i, \forall i,$ or all users allocated, the optimized solution $\{\Kcal_b\}$ is then selected from all the possible solutions in $\tilde{\Kcal}$ as
	\vspace*{-0.2cm}
	\begin{equation}\label{eq_updateKb}
		\{\Kcal_b \} = \underset{\tilde{\Kcal}^{(\kappa)}\subset\tilde{\Kcal}}{\mathrm{argmin}} \ |\tilde{\Kcal}^{(\kappa)}|.
		\vspace*{-0.1cm}
	\end{equation}
	We notice that once $\{\Kcal_b \}$ is obtained, the beam center of the $b$-th beam, $\forall b$, can be computed as
	\vspace*{-0.2cm}
	\begin{equation}\label{eq_update_beam_center}
		\bar{\bo}_b = \frac{1}{|\Kcal_b|}\sum\nolimits_{k=1}^{|\Kcal_b |}\bo_{b,k}, \forall b\in\Bcal.
		\vspace*{-0.1cm}
	\end{equation}
	For the sake of the clarity, one toy example presenting a realization of users' location for a multi-beam satellite system serving $10$ users is given  as follows.
	
	\vspace*{-0.2cm}
	\begin{example}
		We consider a MEO satellite system where the parameter setting is given in Section~\ref{Sec:NumericalResults}. On the ground, there are $10$ users with their locations illustrated in Fig. \ref{Fig_example}(b). Users whom the same beam can serve are represented by a link between the two corresponding locations. By utilizing \eqref{eq:Edge} with $2\bar{\psi} = 3.2^\circ$ \cite{hpbw}, we can construct a graph $G $ as shown in Fig.~\ref{Fig_example}(b) comprising $\mathcal{V}(G) = \{1, \ldots, 10 \}$ and $\mathcal{E}(G) = \{ \{k,\ell\}| \psi_{k,\ell} \leq 1.6^\circ, \forall k,\ell\in \mathcal{V}(G) \}$. Subsequently, 
		the adjacency matrix is formulated as 
		\vspace*{-0.2cm}
		\begin{equation} \label{eq:U10}
			\mathbf{U}= \begin{bmatrix}
				1 & 0 & 1 & 1 & 1 & 0 & 1 & 0 & 0 & 0\\
				0 & 1 & 1 & 0 & 0 & 0 & 0 & 1 & 0 & 1\\
				1 & 1 & 1 & 1 & 0 & 1 & 0 & 0 & 0 & 0\\
				1 & 0 & 1 & 1 & 1 & 0 & 0 & 0 & 1 & 0\\
				1 & 0 & 0 & 1 & 1 & 0 & 1 & 0 & 0 & 0\\
				0 & 0 & 1 & 0 & 0 & 1 & 0 & 1 & 0 & 0\\
				1 & 0 & 0 & 0 & 1 & 0 & 1 & 0 & 0 & 0\\
				0 & 1 & 0 & 0 & 0 & 1 & 0 & 1 & 0 & 1\\
				0 & 0 & 0 & 0 & 0 & 0 & 0 & 0 & 1 & 0\\
				0 & 1 & 0 & 0 & 0 & 0 & 0 & 1 & 0 & 1\\
			\end{bmatrix}.
			\vspace*{-0.1cm}
		\end{equation}
		Utilizing the adjacency matrix in \eqref{eq:U10} and the undirected property, all possible cliques can be synthesized as
		\vspace*{-0.2cm}
		\begin{align}
			\Hcal_1=& \{(1), (2), (3), (4) ,(5), (6), (7), (8), (9), (10)\},\\
			\Hcal_2 =& \{(1,3),	(1,4),	(1,5),	(1,7),	(2,3),	(2,8), (2,10), \label{eq:Clique2}\\	&	(3,4),	(3,6),	(4,5),	(4,9),	(5,7),	(6,8),	(8,10)\}, \notag \\
			\Hcal_3 =& \{(1,3,4),	(1,4,5),	(1,5,7),	(2,8,10)\}, \label{eq:Clique3}
			\vspace*{-0.1cm}
		\end{align}
		From the cliques with the $3$ edges involved in $\Hcal_3$, the following cliques can coexist
		\vspace*{-0.2cm}
		\begin{align}
			\tilde{\Kcal}^{(1)} &= \{(2,8,10),(1,3,4)\}, \\
			\tilde{\Kcal}^{(2)} &= \{(2,8,10),(1,4,5)\}, \\
			\tilde{\Kcal}^{(3)} &= \{(2,8,10),(1,5,7)\},
			\vspace*{-0.1cm}
		\end{align}
		which indicate that the cliques should be non-overlapping to fulfill the constraints \eqref{probGlobald}. We further expand  $\tilde{\Kcal}^{(\kappa)}, \kappa \in \{ 1,2,3\}$ by adding the  cliques with the $2$ edges in \eqref{eq:Clique2} and update  $\{\tilde{\Kcal}^{(\kappa)}\}$ to obtain the solutions. More specifically, $\tilde{\Kcal}^{(1)} \leftarrow \tilde{\Kcal}^{(1)} \cup \{(5,7)\}$, $\tilde{\Kcal}^{(2)} \leftarrow \tilde{\Kcal}^{(2)} \cup \{(3,6)\} $, and $\tilde{\Kcal}^{(2)} \leftarrow \tilde{\Kcal}^{(2)} \cup \{(3,6), (4,9)\} $, which result in
		\vspace*{-0.2cm}
		\begin{align}
			\tilde{\Kcal}^{(1)} &= \{(2,8,10),(1,3,4), (5,7)\},\\
			\tilde{\Kcal}^{(2)} &= \{(2,8,10),(1,4,5), (3,6)\}, \\
			\tilde{\Kcal}^{(3)} &= \{(2,8,10),(1,5,7), (3,6), (4,9)\},
			\vspace*{-0.1cm}
		\end{align}
		Finally, $\tilde{\Kcal}^{(\kappa)}, \kappa \in \{1,2, 3 \}$,  are expanded by utilizing $\Hcal_1$ as $\tilde{\Kcal}^{(1)} \leftarrow \tilde{\Kcal}^{(1)}\cup \{(6), (9)\}$, $\tilde{\Kcal}^{(2)} \leftarrow \tilde{\Kcal}^{(2)}\cup \{(7),(9)\}$, and $\tilde{\Kcal}^{(3)} \leftarrow \tilde{\Kcal}^{(3)}$, which result in
		\vspace*{-0.2cm}
		\begin{align}
			\tilde{\Kcal}^{(1)} &= \{(2,8,10),(1,3,4), (5,7), (6), (9)\}, \\
			\tilde{\Kcal}^{(2)} &= \{(2,8,10),(1,4,5), (3,6), (7), (9)\}, \\
			\tilde{\Kcal}^{(3)} &= \{(2,8,10),(1,5,7), (3,6), (4,9)\}.
			\vspace*{-0.1cm}
		\end{align}
		Using \eqref{eq_updateKb}, the last solution including the minimum number of set, i.e., $\{\Kcal_b \} = \tilde{\Kcal}^{(3)} = \{(2,8,10),(1,5,7), (3,6), (4,9)\}$. Therefore, the number of active beams  should be $ |\Bcal| = 4$.
	\end{example}
	\begin{algorithm}[t]
	\begin{algorithmic}[1]\fontsize{10}{10.5}\selectfont
		\protect\caption{A two-stage algorithm to solve problem \eqref{probGlobal}} 
		\label{alg_Find_B2}
		\global\long\def\algorithmicrequire{\textbf{INPUT:}}
		\REQUIRE The longitudes and latitudes of users and satellites, $\bo_{b,k}, \bar{\bo}_b, \forall b,k$; the HPBW $2{\bar{\psi}}$; 
		\STATE \% \textit{Stage 1: Define the number of active beams}
		\STATE Formulate graph $G$ with a pair $(\mathcal{V}(G), \mathcal{E}(G))$ and compute the adjacency matrix $\bU$ as in \eqref{eq_U}.
		\STATE Compute all possible cliques $\mathcal{H}_i = \{ \tilde{\Xcal}_{b,i} \}$ from graph $G$ by utilizing Lemma~\ref{lemma:clique}.
		\STATE Initial $\{\tilde{\Kcal}^{(0)}\}$
		\FOR{each group set $\Hcal_i$ in $\Hcal$}
		\FOR{each group set $\tilde{\Kcal}^{(\kappa)}$ in  $\{\tilde{\Kcal}^{(\kappa')}\}$}
		\IF{$|\tilde{\Kcal}^{(\kappa)}| < K$}
		\STATE Update $\tilde{\Kcal}^{(\kappa)}$ as in \eqref{eq:Kkappa}.
		\ENDIF
		\ENDFOR
		\ENDFOR
		\STATE (Optional) Update $\{\Kcal_b^*\}$ as in \eqref{eq_updateKb} and beam centers as in \eqref{eq_update_beam_center}.
		\STATE \% \textit{Stage 2: Refine the beam centers}
		\STATE Transform all user locations $\{\bo_k\}$ to  the Cartesian coordinates  by utilizing $x = R\cos(\phi)\cos(\theta)$, $y = R \cos(\phi) \sin(\theta)$, and $z = R \sin(\phi)$.
		\STATE \textbf{do}
		\STATE \hspace*{6.5pt}	Perform the K-means clustering  to problem~\eqref{prob_Kmean_Cartesv1}.
		\STATE \textbf{while} any $\mathbf{U}_{k,\ell} \neq 1, \forall k,k,\ell\in\Kcal_b $.
		\STATE Transform all  the beam centers $\{\bmu_b\}$ to  the geographic coordinates as $\phi = \arcsin(z/R)$ and $\theta = \arctan(y/x)$.
		\global\long\def\algorithmicrequire{\textbf{OUTPUT:}}
		\REQUIRE The user sets $\{\Kcal_b^*\}$ and the number of active beams $B = |\tilde{\Kcal}|$.
	\end{algorithmic}
\end{algorithm}
	\vspace*{-0.35cm}
	\subsection{Refine the solution with K-means clustering}
	\vspace*{-0.15cm}
	By applying the first stage, the number of active beams has been obtained with a guarantee that users are in the HPBW of each beam. The main drawback of this stage is that this stage has focused on maximizing the number of users in one beam, which might cause an unbalanced load among the beams. As a result, some active beams need to serve many users, while the remaining is in a sparse situation with a few users. {\hili For a fairness level with a balanced load among the active beams, our goal is to minimize the distance from every user to its beam center targeting at a homogeneous network.} Mathematically, we solve the following optimization problem
	\vspace*{-0.2cm}
	\begin{subequations} \label{Problem:KmeanV1}
		\begin{alignat}{4}
			&\underset{\{\Kcal_b\}, \{\bar{\bo}_b \}}{\mathrm{minimize}}&\quad &    \sum\nolimits_{b \in\Bcal}^{}  \sum\nolimits_{k \in \mathcal{K}_b } \tilde{d}( \mathbf{o}_{b,k} ,\bar{\bo}_b)^2 \label{Problem:KmeanV1a}\\
			&\mbox{subject to} && \eqref{probGlobalb}-\eqref{probGlobale}.
		\end{alignat}
	\end{subequations}
	In order for the network to handle problem~\eqref{Problem:KmeanV1}, a special mechanism should be applied for the geographic coordinates. In particular, we convert each geographic coordinate denoted by the longitude and latitude $(\phi, \theta)$ to the corresponding Cartesian coordinate denoted by $(x,y,z)$ by exploiting the following relationship $x = R\cos(\phi)\cos(\theta)$, $y = R \cos(\phi) \sin(\theta)$, and $z = R \sin(\phi)$.
	Hence,  the Cartersian coordinates $\tilde{\bo}_{b,k}$ and $\tilde{\bar{\bo}}_b$ are obtained for all the users and the beam center of the $b$-th beam by utilizing  $\bo_{b,k}$ and $\bar{\bo}_b, \forall b,k$, respectively. Next, problem~\eqref{Problem:KmeanV1} is reformulated as
	\vspace*{-0.2cm}
	\begin{subequations} \label{prob_Kmean_Cartes}
		\begin{alignat}{4}
			&\underset{\{\Kcal_b\}, \{\tilde{\bar{\bo}}_b \}}{\mathrm{minimize}}&\quad &    \sum\nolimits_{b \in\Bcal}^{}  \sum\nolimits_{k \in \mathcal{K}_b } \|\tilde{\bo}_{b,k} - \tilde{\bar{\bo}}_b\|^2 \label{}\\
			&\mbox{subject to} && \eqref{probGlobalb}-\eqref{probGlobale},
		\end{alignat}
	\end{subequations}
	where $\|\cdot\|$ is the Euclidean norm. By neglecting the constraints \eqref{probGlobalb},  problem~\eqref{prob_Kmean_Cartes} is aligned with the standard form of the K-means clustering as
	\vspace*{-0.2cm}
	\begin{subequations} \label{prob_Kmean_Cartesv1}
		\begin{alignat}{4}
			&\underset{\{\Kcal_b\}, \{\tilde{\bar{\bo}}_b \}}{\mathrm{minimize}}&\quad &    \sum\nolimits_{b \in\Bcal}^{}  \sum\nolimits_{k \in \mathcal{K}_b } \|\tilde{\bo}_{b,k} - \tilde{\bar{\bo}}_b\|^2 \label{}\\
			&\mbox{subject to} && \eqref{probGlobalc}-\eqref{probGlobale},
			\vspace*{-0.0cm}
		\end{alignat}
	\end{subequations}
	and therefore the optimal solution to $\{ \Kcal_b \}$ and $\{\tilde{\bar{\bo}}_b \}$ can be obtained in polynomial time. Let us denote  $\{ \tilde{\bar{\bo}}_b^*\}$ the optimal solution to the beam centers, we can reverse the optimal geographic coordinates. Noting that  a Cartesian coordinate $(x,y,z)$, the corresponding geographic coordinate $(\phi,\theta)$ is mathematically computed as $\phi = \arcsin(z/R)$ and $\theta = \arctan(y/x)$.
	Since the relaxed problem~\eqref{prob_Kmean_Cartesv1} does not guarantee the HPBW requirements, we may need to perform the K-means clustering approach several times. The beam centers and users per beam are refined based on the K-means clustering, which is summarized in the second stage as shown in Algorithm~\ref{alg_Find_B2}. The proposed algorithm combines the benefits of both the aforementioned stages by efficiently solving problem \eqref{probGlobal} in a hierarchical fashion.
	\vspace*{-0.2cm}
	\section{Numerical Results} \label{Sec:NumericalResults}
	\vspace*{-0.15cm}
	The considered beam placement framework is testified by a MEO satellite serving simultaneously multiple users randomly located in an area with latitude and longitude within the ranges as $[30, 40]$ and $[-120, -110]$, respectively. The satellite is located at the coordinate whose the [latitude, longitude]  is $[0^\circ, -88.7^\circ]$ and its  altitude is  $8063$~[km]. The total transmit power at the satellite is $23.5$~[dBW]. The carrier frequency is $18.05$~[GHz] and the aperture radius is $5\lambda$ with $\lambda$ being the wavelength. The satellite antenna diameter is $0.6$~[m], while the maximum gain at each beam center is $50$~[dBi] and HPBW $2\bar{\psi} = 3.2^\circ$ \cite{hpbw}. The noise variance is $-118$~[dBW]. 
\begin{figure*}[t]
	\begin{minipage}{0.5\textwidth}
		\centering
		\includegraphics[trim=0.3cm 0.0cm 1.2cm 0.4cm, clip=true, width=2.5in]{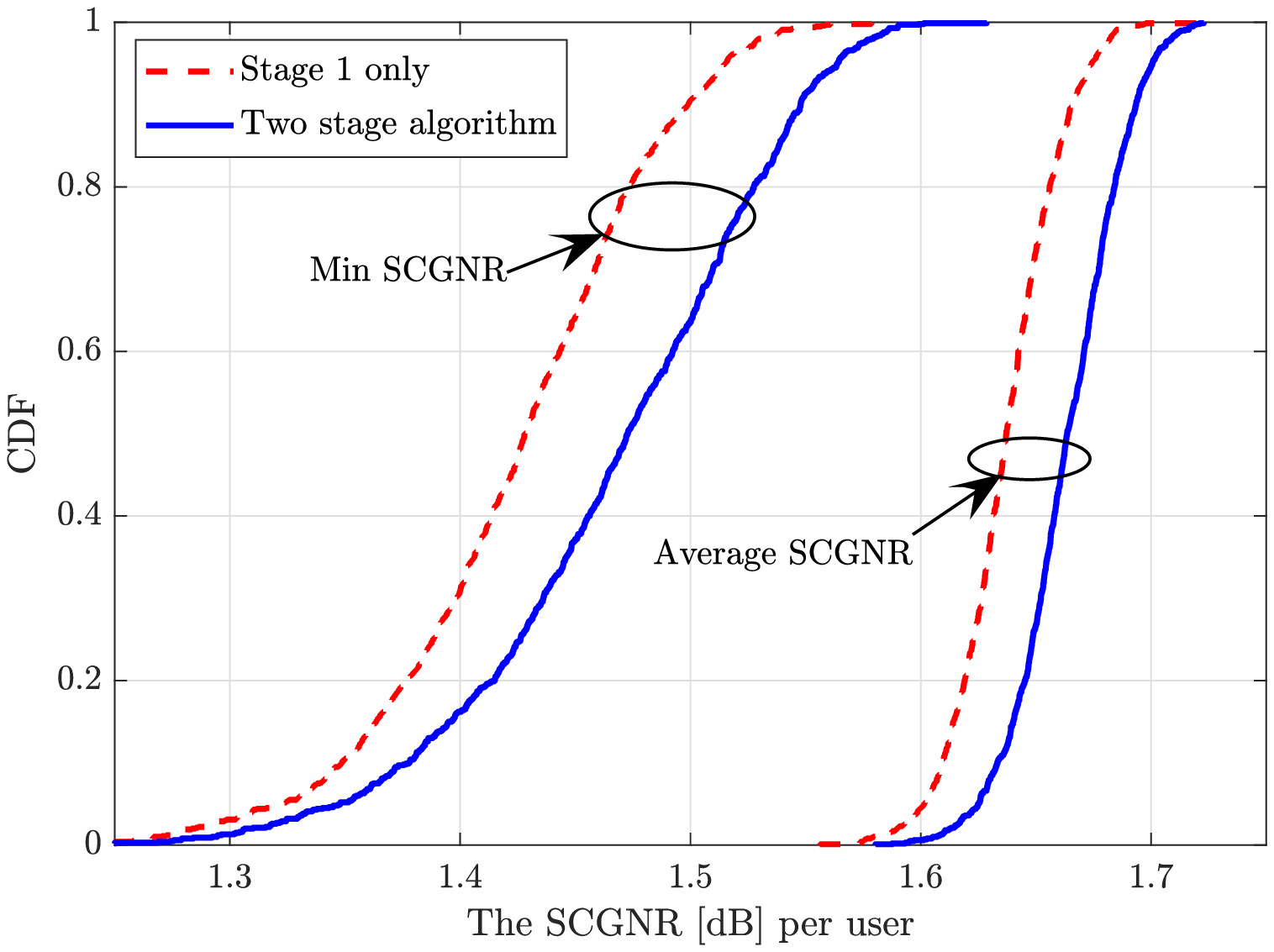} \\
		$(a)$
					\vspace*{-0.1cm}
	\end{minipage}
	\hfill
	\begin{minipage}{0.5\textwidth}
		\centering
		\includegraphics[trim=0.3cm 0.0cm 1.2cm 0.4cm, clip=true, width=2.5in]{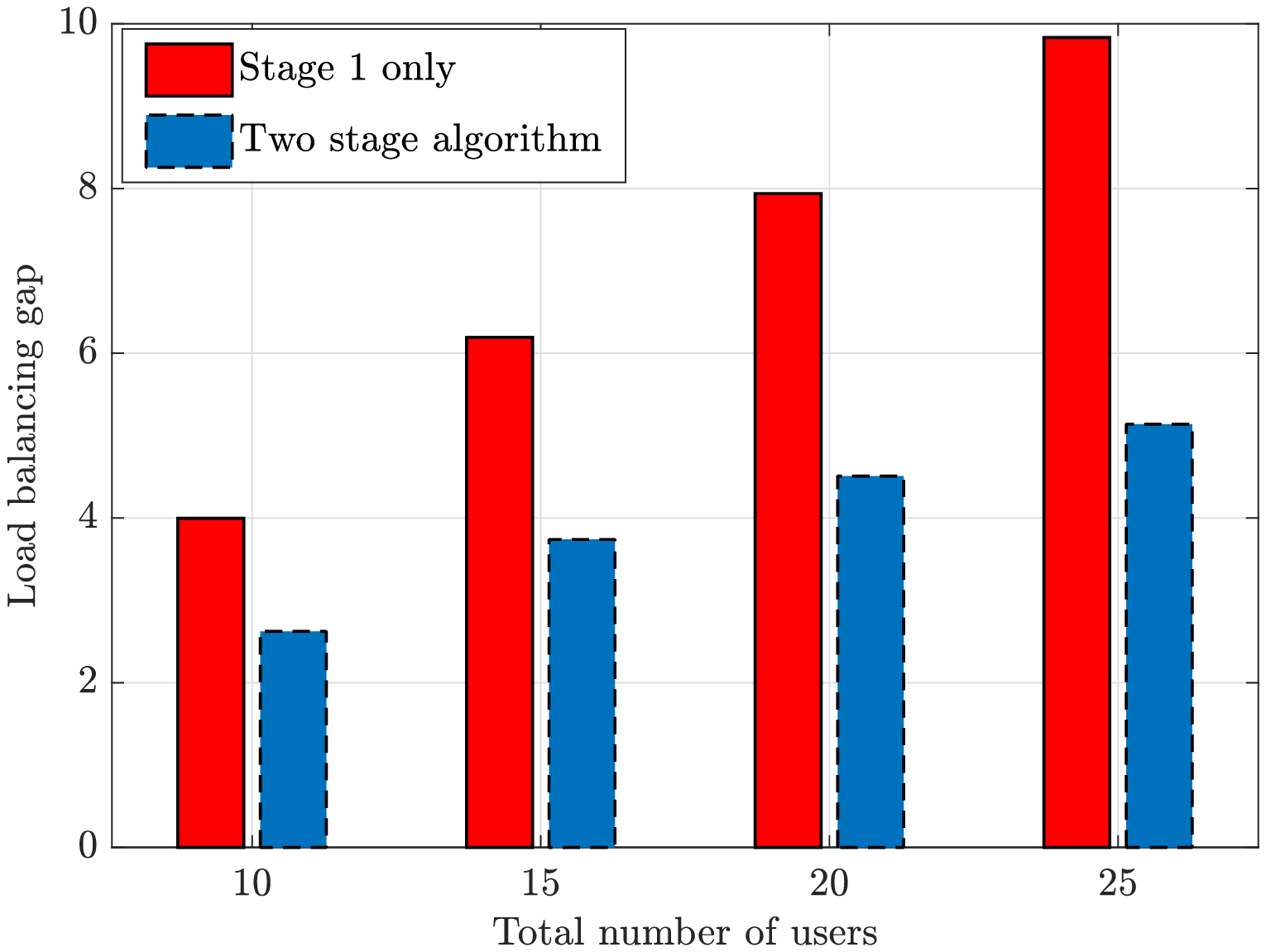} \\
		$(b)$
					\vspace*{-0.1cm}
	\end{minipage}
	\caption{\textcolor{black}{The system performance:
			$(a)$ plots CDF of the SCGNR [dB] per user; and $(b)$ plots load balancing gap versus the total number of users.}} \label{Fig:SysPer}
			\vspace*{-15pt}
			\label{fig:gap}
\end{figure*}
	\begin{table}[t]
		\caption{The CNR per user by utilizing  different benchmarks.}
		\label{SNR}
		\centering
		\begin{tabular}{|ll|c|c|c|c|}
			\hline
			\multicolumn{2}{|l|}{\backslashbox{Ben./CNR}{No. of users}}                             & 10    & 15    & 20    & 25    \\ \hline
			\multicolumn{1}{|l|}{\multirow{2}{*}{Beam Aperture}}         & Min CNR [dB] & 18.98 & 18.15 & 17.54 & 17.09 \\ \cline{2-6} 
			\multicolumn{1}{|l|}{}                                     & Avg. CNR [dB] & 19.09 & 18.28 & 17.69 & 17.25 \\ \hline
			\multicolumn{1}{|l|}{\multirow{2}{*}{Homo. Balance }}          & Min CNR [dB]  & 19.97 & 19.08 & 18.64 & 18.27 \\ \cline{2-6} 
			\multicolumn{1}{|l|}{}                                     & Avg. CNR [dB] & 20.44 & 19.68 & 19.25 & 18.98 \\ \hline
			\multicolumn{1}{|l|}{\multirow{2}{*}{Algorithm~\ref{alg_Find_B2}}} & Min CNR [dB] & 20.36 & 19.58 & 19.12 & 18.85 \\ \cline{2-6} 
			\multicolumn{1}{|l|}{}                                     & Avg. CNR [dB] & 20.52 & 19.76 & 19.32 & 19.05 \\ \hline
		\end{tabular}
				\vspace*{-0.35cm}
	\end{table}

	Fig.~\ref{fig:gap}(a) considers a network with $20$~users and plot  the cumulative distribution function (CDF) of the the statistical channel gain to noise ratio (SCGNR) [dB] for each user, i.e., $10\log_{10}(G_{b,k}/\sigma^2_k), \forall b, k$. Both the average and minimum SCGNR of the users by exploiting Stage~$1$ only or the two-stage algorithm (Algorithm~\ref{alg_Find_B2}). The results show that the two stages bring benefits to both the average and min SCGNRs. Fig~\ref{fig:gap}(b) displays the load balancing gap between beams as a function of the number of users in the network. Stage~$2$ helps minimize the distance from users to its' beam centers while balancing the number of users among beams in the system, thereby offering a smaller load balancing gap than by only utilizing Stage~$1$. 
	
	The link budget, i.e., the received carrier-to-noise ratio (CNR) per user, is illustrated in Table~\ref{SNR} for the different number of users. Specifically, the performance of Algorithm~\ref{alg_Find_B2} is compared with the two previous benchmarks:  $i)$ \textit{Beam Aperture} that was used in \cite[Algorithm 1]{pachler2021static} based  the angular separation between users and the beam aperture angle. Notice that this benchmark focuses on the beam placement only; and $ii)$ \textit{Homo. Balance} where nearest users are grouped together and served by the same beam. This idea was brought from the nearest user selection in terrestrial communications such as \cite{Chien2016b} and more focusing on the load balancing.  
	Slight decreases of both the min CNR and average CNR per user are observed for all three algorithms as the number of users increases. However, Algorithm~\ref{alg_Find_B2} always achieves the best performance among the benchmarks. 
	%
	\vspace*{-0.2cm}
	\section{Conclusion}
	\vspace*{-0.1cm}
	This paper has demonstrated the critical role of beam placement optimization in order to offer good channel gains to all the available users in the coverage area. We proposed an approximate solution to the  non-convex optimization problem and obtained a good feasible solution in polynomial time by a two-stage algorithm. {\hili The first stage defines the minimal number of active beams based on the graph theory. Meanwhile, the second stage balances the load among the active beams by inheriting the benefits of unsupervised learning from K-means clustering.}  Numerical results showed good channel gains by the proposed beam placement and load balancing solution. 
	\vspace*{-0.15cm}
	\bibliographystyle{IEEEtran}
	\balance
	\bibliography{Journal}
\end{document}